\renewcommand{\QED}{{\QEDopen}}
\theoremstyle{definition}
\begin{document}
\pagestyle{empty}
\newcommand{\snr}{\textrm{snr}}
\newcommand{\sech}{\textrm{sech}}
\newcommand{\e}{\textrm{e}}
\newcommand{\snrstar}{{\textrm{snr}^{\star}}}
\newcommand{\SNR}{\textrm{SNR}}
\newcommand{\Bernoulli}{\textrm{Bernoulli}}

\newtheorem{lemma}{Lemma}
\newtheorem{corollary}{Corollary}
\newtheorem{theorem}{Theorem}
\newtheorem{definition}{Definition}
\newtheorem{algorithm}{Algorithm}
\newtheorem{remark}{Remark}
\newcommand{\bre}{\begin{equation}}
\newcommand{\ere}{\end{equation}}
\newcommand{\be}{{\bf {e}}}
\newcommand{\ee}\]
\newcommand{\bra}{\begin{eqnarray}}
\newcommand{\era}{\end{eqnarray}}
\newcommand{\bfg}{\begin{figure}[hbtp]}
\newcommand{\efg}{\end{figure}}
\newcommand{\bver}{\begin{verbatim}}
\newcommand{\ever}{\end{verbatim}}
\newcommand{\bit}{\begin{itemize}}
\newcommand{\eit}{\end{itemize}}
\newcommand{\ben}{\begin{enumerate}}
\newcommand{\een}{\end{enumerate}}
\newcommand{\ett}{\mbox{$\eta$} }
\newcommand{\coeff}[1]{\lfloor #1\rfloor}
\newcommand{\ceil}[1]{\lceil #1\rceil}
\newcommand{\floor}[1]{[ #1 ]}
\newcommand{\bfloor}[1]{\left[ #1 \right]}
\newcommand{\dgr}[1]{\mbox{$#1^{\circ}$}}
\newcommand{\cu}{\mbox{cosmic $\mu$ }}
\newcommand{\csa}[2]{\mbox{$\cos^2(#1 - #2)$}}
\newcommand{\csb}[2]{\mbox{$\cos2(#1 - #2)$}}
\newcommand{\balpha}{\mbox{\boldmath $\alpha$}}
\newcommand{\bbeta}{\mbox{\boldmath $\beta$}}
\newcommand{\blambda}{\mbox{\boldmath $\lambda$}}
\newcommand{\bkappa}{\mbox{\boldmath $\kappa$}}
\newcommand{\bXi}{\mbox{\boldmath $\Xi$}}
\newcommand{\brho}{\mbox{\boldmath $\rho$}}
\newcommand{\bchi}{\mbox{\bf x}}
\newcommand{\bal}{\mbox{\boldmath $\alpha$}_0}
\newcommand{\EE}{\mbox{\boldlarge E}}
\newcommand{\given}{\: | \:}
\newcommand{\Ker}{\mbox{Ker}\,}
\newcommand{\tildepss}{\tilde \epsilon_{s}}
\newcommand{\tildepssl}{\tilde \epsilon_{s_l}}
\newcommand{\epss}{\epsilon_s}
\newcommand{\epssl}{\epsilon_{s_l}}
\newcommand{\bphi}{{\mathbf \Phi}}
\newcommand{\bepsilon}{\mbox{\boldmath $\epsilon$}}
\newcommand{\bOmega}{{\mathbf \Omega}}
\newcommand{\bomega}{{\mathbf \omega}}
\newcommand{\bPhi}{{\mathbf \Phi}}
\newcommand{\bTheta}{{\mathbf \Theta}}
\newcommand{\btOmega}{\tilde{\mathbf \Omega}}
\newcommand{\bSigma}{{\mathbf \Sigma}}

\newcommand{\bsigma}{\mbox{\boldmath $\sigma$}}
\newcommand{\btsigma}{\mbox{\boldmath $\tilde\sigma$}}

\newcommand{\we}{\overrightarrow{e}}
\newcommand{\bA}{{\bf A}}
\newcommand{\bB}{{\bf B}}
\newcommand{\bC}{{\bf C}}
\newcommand{\bI}{{\bf I}}
\newcommand{\bg}{{\bf{g}}}
\newcommand{\bG}{{\bf{G}}}
\newcommand{\bE}{{\bf E}}
\newcommand{\bF}{{\bf F}}
\newcommand{\bm}{{\bf m}}
\newcommand{\bk}{{\bf k}}
\newcommand{\bof}{{\bf f}}
\newcommand{\rmf}{{\rm f}}
\newcommand{\norm}[1]{\|#1\|}
\newcommand{\oQ}{\overline Q}
\newcommand{\tQ}{\tilde Q}
\newcommand{\tD}{\tilde D}
\newcommand{\oD}{\bar D}
\newcommand{\oV}{\bar V}
\newcommand{\tP}{\tilde P}
\newcommand{\tN}{\tilde N}
\newcommand{\tA}{\tilde A}
\newcommand{\tM}{\tilde M}
\newcommand{\tX}{\tilde X}
\newcommand{\tY}{\tilde Y}
\newcommand{\tW}{\tilde W}
\newcommand{\tZ}{\tilde Z}
\newcommand{\tz}{\tilde z}
\newcommand{\tm}{\tilde m}
\newcommand{\tc}{\tilde c}
\newcommand{\tv}{\tilde v}
\newcommand{\tH}{\tilde H}
\newcommand{\btW}{\tilde {\bf W}}
\newcommand{\btA}{\tilde {\bf A}}
\newcommand{\btB}{\tilde {\bf B}}
\newcommand{\btV}{\tilde {\bf V}}
\newcommand{\btH}{\tilde {\bf H}}
\newcommand{\btZ}{\tilde {\bf Z}}
\newcommand{\btY}{\tilde {\bf Y}}
\newcommand{\btM}{\tilde {\bf M}}
\newcommand{\btX}{\tilde {\bf X}}
\newcommand{\btx}{\tilde {\bf x}}
\newcommand{\btc}{\tilde {\bf c}}

\newcommand{\ty}{\tilde y}
\newcommand{\btm}{\tilde {\bf m}}
\newcommand{\bFDPC}{\bF_{DPC}}
\newcommand{\tdet}{\tilde {\textrm{det}}}
\newcommand{\btT}{\tilde {\bf T}}
\newcommand{\btS}{\tilde {\bf S}}
\newcommand{\btn}{\tilde {\bf n}}
\newcommand{\btv}{\tilde {\bf v}}
\newcommand{\bty}{\tilde {\bf y}}
\newcommand{\btp}{\tilde {\bf p}}
\newcommand{\bta}{\tilde {\bf a}}
\newcommand{\PrEras}{\hat{\Pr}}
\newcommand{\ErasEpsilon}{\hat{\epsilon}}
\newcommand{\ErasCY}{\hat{\cY}}
\newcommand{\tab}{\ \ \ \ }
\newcommand{\bigtab}{\tab \tab \tab}
\newcommand{\bc}{{\bf c}}
\newcommand{\type}{{\mathrm type}}
\newcommand{\QEC}{{\mathrm{QEC}}}
\newcommand{\scnd}{{\mathrm {scnd}}}
\newcommand{\APP}{{\mathrm {APP}}}
\newcommand{\LLR}{{\mathrm {LLR}}}
\newcommand{\Cov}{{\mathrm{Cov}}}
\newcommand{\cov}{{\mathrm{cov}}}
\newcommand{\GF}{{\mathrm {GF}}}
\newcommand{\rank}{{\mathrm{rank\:}}}
\newcommand{\spn}{{\mathrm{span\:}}}
\newcommand{\E}{{\mathrm E}}
\newcommand{\bmu}{\mbox{\boldmath $\mu$}}
\newcommand{\bxi}{\mbox{\boldmath $\xi$}}
\newcommand{\cL}{{\cal L}}
\newcommand{\bN}{{\bf N}}
\newcommand{\cB}{{\cal B}}
\newcommand{\cH}{{\cal H}}
\newcommand{\cU}{{\cal U}}
\newcommand{\baa}{\begin{eqnarray*}}
\newcommand{\eaa}{\end{eqnarray*}}

\newcommand{\ds}{{\:d\bf s}}
\newcommand{\du}{{\:d\bf u}}
\newcommand{\dy}{{\:d\bf y}}
\newcommand{\dH}{{\:d\bf H}}
\newcommand{\dHyus}{\dH\dy\du\ds}
\newcommand{\diag}{\textrm{diag}}
\newcommand{\sign}{\textrm{sign}}
\newcommand{\abs}{\textrm{abs}}

\newcommand{\bs}{{\bf s}}
\newcommand{\ba}{{\bf a}}
\newcommand{\bb}{{\bf b}}
\newcommand{\bq}{{\bf q}}
\newcommand{\qstar}{{q^{\star}}}
\newcommand{\Qstar}{{Q^{\star}}}
\newcommand{\xstar}{{x^{\star}}}
\newcommand{\ystar}{{y^{\star}}}
\newcommand{\bp}{{\bf p}}
\newcommand{\bX}{{\bf X}}
\newcommand{\obX}{{\bar {\bf X}}}
\newcommand{\obx}{{\bar {\bf x}}}
\newcommand{\obm}{{\bar {\bf m}}}
\newcommand{\obY}{{\bar {\bf Y}}}
\newcommand{\oby}{{\bar {\bf y}}}
\newcommand{\obZ}{{\bar {\bf Z}}}
\newcommand{\obU}{{\bar {\bf U}}}
\newcommand{\obW}{{\bar {\bf W}}}
\newcommand{\obu}{{\bar {\bf u}}}
\newcommand{\obw}{{\bar {\bf w}}}
\newcommand{\obN}{{\bar {\bf N}}}
\newcommand{\obM}{{\bar {\bf M}}}
\newcommand{\obB}{{\bar {\bf B}}}
\newcommand{\oX}{{\bar {X}}}
\newcommand{\oY}{{\bar {Y}}}
\newcommand{\ou}{{\bar {u}}}
\newcommand{\ow}{{\bar {w}}}
\newcommand{\oR}{{\bar {R}}}
\newcommand{\oM}{{\bar {M}}}
\newcommand{\oB}{{\bar {B}}}
\newcommand{\bU}{{\bf U}}
\newcommand{\bW}{{\bf W}}
\newcommand{\bY}{{\bf Y}}
\newcommand{\bV}{{\bf V}}
\newcommand{\bZ}{{\bf Z}}
\newcommand{\bT}{{\bf T}}
\newcommand{\bS}{{\bf S}}
\newcommand{\bM}{{\bf M}}
\newcommand{\bh}{{\bf h}}
\newcommand{\bH}{{\bf H}}
\newcommand{\DFT}{{\mathrm{DFT}}}
\newcommand{\IDFT}{{\mathrm{IDFT}}}
\newcommand{\ldeg}{{\mathrm ldeg}}
\newcommand{\Real}{{\mathrm Re}}
\newcommand{\weight}{{\mathrm weight}}
\newcommand{\xor}{\oplus}
\newcommand{\bu}{{\bf u}}
\newcommand{\bv}{{\bf v}}
\newcommand{\bt}{{\bf t}}
\newcommand{\bd}{{\bf d}}
\newcommand{\bD}{{\bf D}}
\newcommand{\bw}{{\bf w}}
\newcommand{\bn}{{\bf n}}
\newcommand{\bx}{{\bf x}}
\newcommand{\br}{{\bf r}}
\newcommand{\by}{{\bf y}}
\newcommand{\bz}{{\bf z}}
\newcommand{\bone}{{\bf 1}}
\newcommand{\bzr}{{\bf 0}}
\newcommand{\cA}{{\cal A}}
\newcommand{\cP}{{\cal P}}
\newcommand{\cE}{{\cal E}}
\newcommand{\cF}{{\cal F}}
\newcommand{\cR}{{\cal R}}
\newcommand{\cS}{{\cal S}}
\newcommand{\cT}{{\cal T}}
\newcommand{\cX}{{\cal X}}
\newcommand{\cY}{{\cal Y}}
\newcommand{\oS}{\overline{S}}
\newcommand{\oP}{\overline{P}}
\newcommand{\hP}{\hat{P}}
\newcommand{\hR}{\hat{R}}
\newcommand{\tR}{\tilde{R}}
\newcommand{\hbF}{\hat{\bf F}}
\newcommand{\hF}{\hat{F}}
\newcommand{\hbU}{{\bf{\hat{U}}}}
\newcommand{\tbU}{{\bf{\tilde{U}}}}
\newcommand{\tbS}{{\bf{\tilde{S}}}}
\newcommand{\tTheta}{{\tilde{\Theta}}}
\newcommand{\tbTheta}{{\bf{\tilde{\Theta}}}}
\newcommand{\hbS}{{\bf{\hat{S}}}}
\newcommand{\hw}{\hat{w}}
\newcommand{\cC}{{\mathcal{C}}}
\newcommand{\cN}{{\mathcal{N}}}
\newcommand{\cG}{\mathcal{G}}
\newcommand{\hcC}{\hat{\mathcal{C}}}
\newcommand{\cD}{\mathcal{D}}
\newcommand{\hcD}{\hat{\mathcal{D}}}
\newcommand{\hD}{\hat{D}}
\newcommand{\tcD}{\tilde{\mathcal{D}}}
\newcommand{\tcC}{\tilde{\cal C}}
\newcommand{\tC}{\tilde{C}}
\newcommand{\tS}{\tilde{S}}
\newcommand{\hA}{\hat{A}}
\newcommand{\hB}{\hat{B}}
\newcommand{\hC}{\hat{C}}
\newcommand{\hc}{\hat{c}}
\newcommand{\hbr}{\hat{\bf r}}
\newcommand{\hbw}{\hat{\bf w}}
\newcommand{\hbc}{\hat{\bf c}}
\newcommand{\hby}{\hat{\bf y}}
\newcommand{\btz}{\tilde{\bf z}}
\newcommand{\hX}{\hat{X}}
\newcommand{\hY}{\hat{Y}}
\newcommand{\hy}{\hat{y}}
\newcommand{\hbx}{\hat{\bf x}}
\newcommand{\hbm}{\hat{\bf m}}
\newcommand{\hbX}{\hat{\bf X}}
\newcommand{\hbY}{\hat{\bf Y}}
\newcommand{\beginproof}{\noindent \textbf{Proof: }  }
\newcommand{\finproof}{\noindent $\Box$\\}
\newcommand{\mmse}{\textrm{mmse}}
\newcommand{\uncoded}{\textrm{uncoded}}
\newcommand{\ve}{\varepsilon}
\newcommand{\emptyline}{$\:\\ $}
\def\refeq#1{\: {\stackrel{ (#1)}{=}} \: }
\def\defined{\: {\stackrel{\scriptscriptstyle \Delta}{=}} \: }
\def\psdir#1{#1}
\def\MSE{{\rm MSE}}
\def\argmax{\mathop{\rm argmax}}
\def\CB{\mathop{\rm BS}}
\def\defined{\: {\stackrel{\scriptscriptstyle \Delta}{=}} \: }
\def\leqa{\buildrel \rm {\scriptscriptstyle (1)} \over \leq}
\def\leqb{\buildrel \rm {\scriptscriptstyle (2)} \over \leq}
\def\leqc{\buildrel \rm {\scriptscriptstyle (3)} \over \leq}
\def\eqa{\buildrel \rm {\scriptscriptstyle (1)} \over =}
\def\eqb{\buildrel \rm {\scriptscriptstyle (2)} \over =}
\def\eqc{\buildrel \rm {\scriptscriptstyle (3)} \over =}
\newfont{\boldlarge}{msbm10 scaled 1100}
\newcommand{\RR}{\mbox{\boldlarge R}}
\newcommand{\tr}{\mathrm{tr}}

\newcommand{\labeleq}[1]{\label{eq:#1}}
\newcommand{\refsec}[1]{(\ref{section:#1})}
\newcommand{\labelsec}[1]{\label{section:#1}}

\newcommand{\comment}[1]{}
\newcommand{\topc}[1]{$\stackrel{\circ}{\rm #1}$}
\newcommand{\ccc}[1]{$<$ \textbf{Remark: #1} $>$}
\newcommand{\Kavcic}{Kav\u{c}i\'{c}}

\def\etal{$et \,\,al.$}

\renewcommand{\thesection}{\Roman{section}}

\newlength{\tmpbigbar}

\newcommand{\bigbar}[1]{
 \setlength{\tmpbigbar}{\unitlength}
 \settowidth{\unitlength}{\mbox{$#1$}}
  \stackrel{\barpic}{#1}
 \setlength{\unitlength}{\tmpbigbar}
}
\newcommand{\barpic}{\begin{picture}(1,0.01)(0,0)
\put(0.15,0){\line(12,0){0.7}}
\end{picture}
}

\newcommand\addabove[2]{\: {\stackrel{\scriptscriptstyle (#2)}{#1}} \: }

\newcommand{\myfigure}[3]{\begin{figure}[htp]
\begin{center}
\epsfig{#1}
\end{center}
\caption{#2}\label{#3}
\end{figure}}

\newcommand{\myfigurestar}[3]
{\begin{figure*}
    \begin{center}
    \epsfig{#1}
    \caption{#2}
    \label{#3}
    \end{center}
\end{figure*}}

\newcommand{\TwoQuad}{\quad\quad}
\newcommand{\ThreeQuad}{\quad\quad\quad}
\newcommand{\FourQuad}{\quad\quad\quad\quad}
\newcommand{\FiveQuad}{\quad\quad\quad\quad\quad}
\newcommand{\SixQuad}{\quad\quad\quad\quad\quad\quad}
\newcommand{\SevenQuad}{\quad\quad\quad\quad\quad\quad\quad}
\newcommand{\AlighThisEquation}{&&\hspace{-0.5cm}}



\title{Deep Learning for Decoding of Linear Codes -\\* A Syndrome-Based Approach}

\newcommand\blfootnote[1]{%
  \begingroup
  \renewcommand\thefootnote{}\footnote{#1}%
  \addtocounter{footnote}{-1}%
  \endgroup
}
\author{
\authorblockN{Amir~Bennatan\textsuperscript{*}\thanks{* Both authors contributed equally to this work.}, Yoni Choukroun\textsuperscript{*} and Pavel Kisilev}\\
\authorblockA{
Huawei Technologies Co.,\\
Email: \{amir.bennatan, yoni.choukroun, pavel.kisilev\}@huawei.com} }

\maketitle
\pagestyle{empty}
\thispagestyle{empty}

\begin{abstract}
We present a novel framework for applying deep neural networks (DNN) to soft decoding of linear codes at arbitrary block lengths.  Unlike other approaches, our framework allows unconstrained DNN design, enabling the free application of powerful designs that were developed in other contexts.  Our method is robust to overfitting that inhibits many competing methods, which follows from the exponentially large number of codewords required for their training.  We achieve this by transforming the channel output before feeding it to the network, extracting only the syndrome of the hard decisions and the channel output reliabilities.   We prove analytically that this approach does not  involve any intrinsic performance penalty, and guarantees the generalization of performance obtained during training.  Our best results are obtained using a recurrent neural network (RNN) architecture combined with simple preprocessing by permutation.   We provide simulation results that demonstrate performance that sometimes approaches that of the ordered statistics decoding (OSD) algorithm.
\end{abstract}
\section{Introduction} \label{sec:Introduction}

Interest in applying neural networks to decoding has existed since the 1980's~\cite{Old_1,Old_4,Old_6}.  These early works, however, did not have a substantial impact on the field due to the limitations of the networks that were available at the time.
More recently, deep neural networks were studied in~\cite{Hoydis_1,Hoydis_2,Nachmani_1}.

A major challenge facing applications of deep networks to decoding is the avoidance of overfitting the codewords encountered during training. Specifically, training data is typically produced by randomly selecting codewords and simulating the channel transitions.  Due to the large number of codewords (exponential in the block length), it is impossible to account for even a small fraction of them during training, leading to poor generalization of the network to new codewords.  This issue was a major obstacle in~\cite{Hoydis_1,Hoydis_2}, constraining their networks to very short block lengths.

Nachmani~\etal~\cite{Nachmani_1},~\cite{Nachmani_3} proposed a deep learning framework which is modeled on the LDPC belief propagation (BP) decoder, and is robust to overfitting.  A drawback of their design, however, is that to preserve symmetry, the design is constrained to closely mimic the message-passing structure of BP.  Specifically, the connections between neurons are controlled to resemble BP's underlying Tanner graph, as are the activations at neurons.  This severely limits the  freedom available to the neural networks design, and precludes the application of powerful architectures that have emerged in recent years~\cite{DeepLearningTextBook}.

In this paper, we present a method which overcomes this drawback while maintaining the resilience to overfitting of~\cite{Nachmani_1}.  Our framework allows unconstrained neural network design, paving the way to the application of powerful neural network designs that have emerged in recent years.  Central to our approach is a preprocessing step, which extracts from the channel output only the reliabilities (absolute values), and the syndrome of its hard decisions, and feeds them into the neural network. The network's output is later combined with the channel output to produce an estimate of the transmitted codeword.

Decoding methods that focus on the syndrome are well known in literature on algebraic decoding (see e.g.~\cite{CostelloBook}[Sec. 3.2]).  The approach decouples the estimation of the channel noise from that of the transmitted codeword.  In our context of deep learning, its potential lies in the elimination of the need to simulate codewords during training, thus overcoming the overfitting problem.  A few early works that used shallow neural networks have employed syndromes (e.g.~\cite{Old_1}).  However, these works did not discuss its potential in terms of overfitting, presumably because the problem was not as acute in their relatively simple networks.  Importantly, their approach does not apply in cases where the channel includes reliabilities (mentioned above).

Our approach in this paper extends syndrome decoding to include channel reliabilities, and applies it to overcome the overfitting issue mentioned above.  We provide a rigorous analysis which proves that our framework incurs no loss in optimality, in terms of bit error rate (BER) and mean square error (MSE).  Our analysis utilizes some techniques developed by Burshtein~\etal~\cite{Sason_2}, Wiechman and Sason~\cite{Sason_1} and Richardson and Urbanke~\cite{UrbankeBook}[Sec.~4.11].

Building on our analysis, we propose two deep neural network architectures for decoding of linear codes.  The first is a vanilla multilayer network, and the second a more-elaborate recurrent neural network (RNN) based architecture.  We also develop a preprocessing technique (beyond the above-mentioned computation of the syndrome and reliabilities), which applies a permutation (an automorphism) to the decoder input to facilitate the operation of the neural network.  This technique builds on ideas  by Fossorier~\etal~\cite{OSD_1,OSD_2} and Dimnik and Be'ery~\cite{Beery} but does {\it not} involve list decoding.  Finally, we provide simulation results for decoding of BCH codes which for the case of BCH(63,45), demonstrate performance approaches that of the ordered statistics algorithm (OSD) of~\cite{OSD_1,OSD_2}.

Summarizing, our main contributions are:
\begin{enumerate}
\item A novel deep neural network training framework for decoding, which is robust to overitting.  It is based on the extension of syndrome decoding that is described below.
\item An extension of syndrome decoding which accounts for reliabilities.   As with legacy syndrome decoding, we define a generic framework, leaving room for a noise-estimation algorithm which is specified separately. We provide analysis that proves that the framework involves no loss of optimality, and that regardless of the noise-estimation algorithm, performance is invariant to the transmitted codeword.   
\item Two neural network designs for the noise-estimation algorithm, including an elaborate RNN-based architecture.
\item A simple preprocessing technique that applies permutations to boost the decoder's performance.
\end{enumerate}

Our work is organized as follows.  In Sec.~\ref{sec:Notation} we introduce some notations and in Sec~\ref{sec:nn} we provide some bakground on neural networks.  In Sec.~\ref{sec:Theorem_Main} we describe our syndrome-based framework and provide an analysis of it.  In Sec.~\ref{sec:Architecture} we discuss  deep neural network architectures as well as preprocessing by permutation.  In Sec.~\ref{sec:simulation_results} we present simulation results.  Sec.~\ref{sec:Conclusion} concludes the paper.

\section{Notations}\label{sec:Notation}

\def\bipolar{\textrm{bipolar}}
\def\bin{\textrm{bin}}
\def\MAP{\textrm{MAP}}

We will often use the superscripts $b$ and $s$ (e.g., $x^b$ and $x^s$) to denote binary values (i.e, over the alphabet $\{0,1\}$) and bipolar values (i.e., over  $\{\pm1\}$), respectively.  We define the mapping from the binary to the bipolar alphabet, denoted $ \bipolar(\cdot)$, by $0 \rightarrow 1, 1\rightarrow -1$ and let $\bin(\cdot)$ denote the inverse mapping.
Note that the following identity holds:
\begin{eqnarray}\label{eq:3}
\bin(x^s\cdot y^s) = \bin(x^s)\xor\bin(y^s), \tab\forall x^s,y^s\in\{\pm 1\}
\end{eqnarray}
where $\xor$ denotes XOR.   $\sign(x)$ and $|x|$ denote the sign and absolute value of any real-valued $x$, respectively,  In our analysis below, when applied to a vector, the operations $\bipolar(\cdot)$, $\bin(\cdot)$, $\sign(\cdot) $ and $|\cdot|$ are assumed to be applied independently to each of the vector's components.

\section{Brief Overview of Deep Neural Networks}\label{sec:nn}

We now briefly describe the essentials of neural networks.  Our discussion is in no way comprehensive, and there are many variations on the simple setup we describe.  For an elaborate discussion, see e.g.~\cite{DeepLearningTextBook}.

\myfigure{file=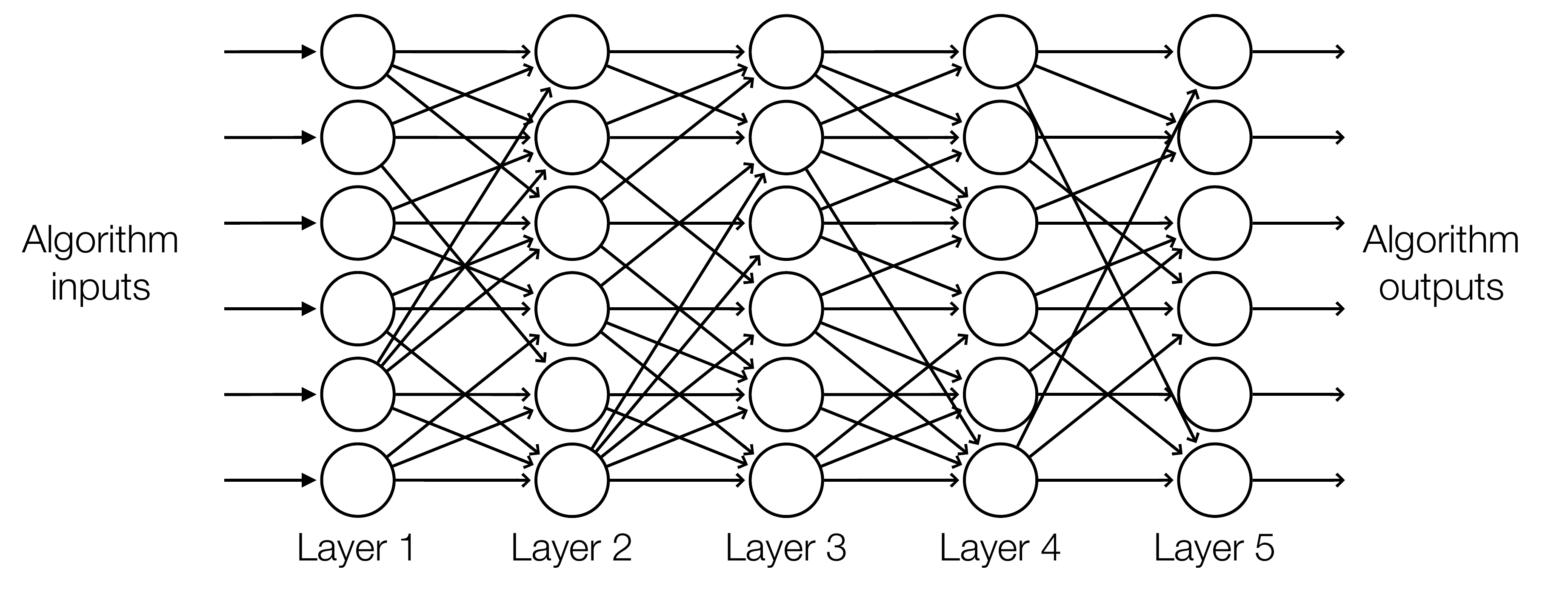, width = 8cm}{Illustration of a multi-layer neural network.}{fig:6}

Fig.~\ref{fig:6} depicts a simple multilayered neural network.  The network is a directed graph whose structure is the blueprint for an associated computational algorithm. 
The nodes are called {\it neurons} and each performs (i.e., is associated with) a simple computation on inputs to produce a single output.   The algorithm inputs are fed into the first-layer neurons, whose outputs are fed as inputs into the next layer and so forth.  Finally, the outputs of the last layer neurons become the algorithm output.  {\it Deep} networks are simply neural networks with many layers.

The inputs and output at each neuron are real-valued numbers. To compute its output, each neuron first computes an affine function of its input (a weighted sum plus a {\it bias}).  It then applies a predefined {\it activation} function, which is typically nonlinear (e.g., sigmoid or hyperbolic tangent), to render the output.

The power of neural networks lies in their configurability.  Specifically, the weights and biases at each neuron are parameters which can be tailored to produce a diverse range of computations.  Typically, the network is configured by a training procedure.  This procedure relies on a sample dataset, consisting of inputs, and in the case of {\it supervised} training, of desired outputs (known as {\it labels}).  There are several training paradigms available, most of which are variations of gradient descent.

Overfitting occurs typically when the training dataset is insufficiently large or diverse to be representative of all valid network inputs.  The resulting network does not generalize well to inputs that were not encountered in the training set.


\section{The Proposed Syndrome-Based Framework and its Analysis} \label{sec:Theorem_Main}
\subsection{Framework Definition} \label{sec:1}

We begin by briefly discussing the encoder.  We assume standard transmission that uses a linear code $C$.  We let $\bm \in \{0,1\}^K$ denote the input message, which is mapped to a codeword $\bx^b \in \{0,1\}^N$ (recall that the superscript $b$ denotes binary vectors).   We assume that $\bx^b$ is mapped to a bipolar vector $\bx$  using the  mapping defined in Sec.~\ref{sec:Notation}, and $\bx$ is transmitted over the channel.  We let $\by$ denote the channel output.  For convenience, we define $\bm$, $\bx^b$, $\bx$ and $\by$ to be column vectors. 

\myfigure{file=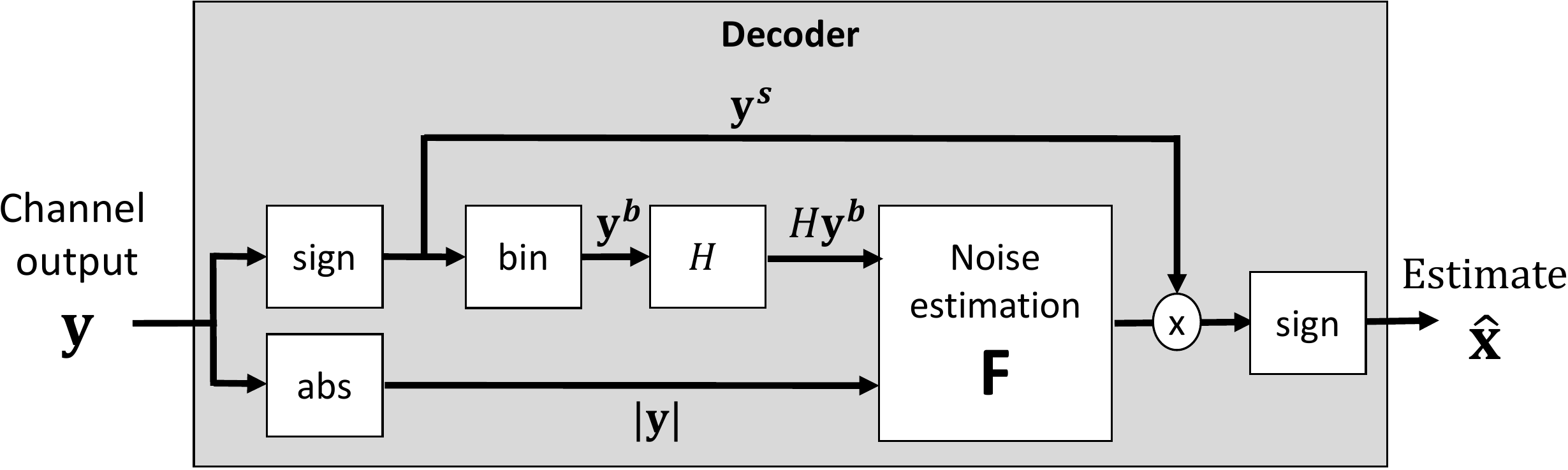, width = 9cm}{Decoder framework.  The  final $\sign$ operation is omitted when the system is designed to produce soft decisions.  In Sec.~\ref{sec:Architecture}, we will implement $\bF$ using a neural network.}{fig:1}
 
 Our decoder framework is depicted in Figure~\ref{fig:1}.  The decoder's main component is $\bF$, whose role is to estimate the channel noise, and which we will later (Section~\ref{sec:Architecture}) implement using a deep neural network.  The discussion in this section, however, applies to arbitrary $\bF$.
The inputs to $\bF$ are the absolute value $|\by|$ and the {\it syndrome} $H\by^b$, where $H$ is a parity check matrix of the code $C$, $\by^b$ is a vector of hard decisions, $\by^b = \bin(\sign(\by))$ where $\bin(\cdot)$ is simply the inverse of the above defined bipolar mapping.  The multiplication by $H$ is modulo-2.  The output of $\bF$ is multiplied (componentwise) by $\by^s$, which is the sign of $\by$.  Finally, when interested in hard decisions, we take the sign of the results and define this to be the estimate $\hbx$.  This final hard decision step (which is depicted in Fig.~\ref{fig:1}) is omitted when the system is required to produce soft decisions.

\subsection{Binary-Input Symmetric-Output (BISO) Channels}

Our analysis in the following section applies to a broad class of channels known as binary-input symmetric-output (BISO) channels.  This class includes binary-input AWGN channels, binary-symmetric channels (BSCs) and many others.  Our definition below follows  Richarson~\etal~\cite{Urbanke_Message_Passing}[Definition~1].  
\begin{definition}
Consider a memoryless channel with input alphabet $\{\pm1\}$.  The channel is BISO if its transition probability function satisfies:
\begin{eqnarray}
\Pr[Y=y\given X=1] = \Pr[Y=-y\given X=-1], 
\label{eq:6}
\end{eqnarray}
for al $y$ the channel output alphabet, where $X$ and $Y$ denote the random channel input and output (respectively).
\end{definition}
\vspace{0.2cm}
An important feature of BISO channels is that their random transitions can be modeled by~\cite{Urbanke_Message_Passing}[proof of Lemma~1],
\begin{eqnarray}
Y = X\cdot \tZ,\label{eq:5}
\end{eqnarray}
where~$\tZ$ is random noise which is independent of the transmitted $X$.  The tilde in $\tZ$ serves to indicate that this is an equivalent statistical model, which might differ from the true physical one.  To prove~\eqref{eq:5}, we simply define $\tZ$ to be a random variable distributed as~$\Pr[Y\given X=1]$ and independent of $X$.  The validity now follows from~\eqref{eq:6}.

\subsection{Analysis}\label{sec:analysis}
We now show that the decoder framework involves no penalty in performance in terms of metrics mean-squared-error (MSE) or bit error rate (BER).  That is, the decoder can be designed to achieve any $\it{one}$ of them.  Importantly, it addresses the overfitting problem that was described in Sec.~\ref{sec:Introduction}.
\begin{theorem} \label{thrm:Main} The follwing holds with respect to the framework of Sec.~\ref{sec:1}, assuming communication over a BISO channel:
\begin{enumerate}
\item The framework incurs no intrinsic loss of optimality, in the sense that with an appropriately designed $\bF$, the decoder can achieve maximum {\it a-posteriori} (MAP) or minimum MSE (MMSE) decoding. \label{item:1}
\item For any choice of $\bF$, the decoder's BER and MSE, conditioned on transmission of any codeword $\bx$, are both invariant to $\bx$.\label{item:2}
\end{enumerate}
\end{theorem}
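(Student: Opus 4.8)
The plan is to reduce both parts to a single structural observation about the two inputs of $\bF$. Under the equivalent noise model $\by=\bx\cdot\tZ$ of~\eqref{eq:5}, with $\tZ$ independent of $\bx$, the fact that every component of $\bx$ lies in $\{\pm1\}$ gives $|\by|=|\tZ|$ componentwise and $\by^s=\sign(\by)=\bx\cdot\sign(\tZ)$. Applying the identity~\eqref{eq:3} componentwise, $\by^b=\bin(\sign(\by))=\bx^b\xor\bz^b$ with $\bz^b\defined\bin(\sign(\tZ))$; since $\bx^b$ is a codeword, $H\bx^b=\bzr$, so $H\by^b=H\bz^b$. Hence the pair $(|\by|,H\by^b)=(|\tZ|,H\bz^b)$ --- and therefore $\bF(|\by|,H\by^b)$ --- is a deterministic function of the noise $\tZ$ alone, with no dependence on the transmitted codeword. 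I would isolate this as a lemma, since both parts of the theorem fall out of it.

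For Part~\ref{item:2}, write the soft decoder output as $\hbx_{\mathrm{soft}}=\by^s\cdot\bF(|\by|,H\by^b)=\bx\cdot\big(\sign(\tZ)\cdot\bF(|\tZ|,H\bz^b)\big)$. Because each component of $\bx$ squares to $1$, the squared error $\|\hbx_{\mathrm{soft}}-\bx\|^2=\sum_i\big(\sign(\tZ_i)\,\bF_i(|\tZ|,H\bz^b)-1\big)^2$ contains $\bx$ no longer; its expectation (over $\tZ$, which is independent of $\bx$) is the conditional MSE, which is thus the same for every transmitted $\bx$. Likewise $\hbx_i=\sign(\hbx_{\mathrm{soft},i})=\bx_i\cdot\sign\big(\sign(\tZ_i)\bF_i(|\tZ|,H\bz^b)\big)$, so a bit error at coordinate $i$ occurs iff $\sign(\tZ_i)\,\bF_i(|\tZ|,H\bz^b)\le 0$, an event determined by $\tZ$ alone; hence the conditional BER is also invariant to $\bx$. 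Ties (where $\bF_i=0$ or the argument vanishes) are resolved by any fixed convention and do not affect the argument.

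For Part~\ref{item:1}, assume a uniform prior on the codewords of $C$ (e.g., i.i.d.\ uniform message bits). From the model~\eqref{eq:5}, $\Pr[\by\mid\bx=\bc]=\prod_i p(c_i\by_i)$ for a single transition law $p$ (the law of $\tZ$). Since $c_i\by_i$ has magnitude $|\by_i|$ and sign $c_i\by^s_i$, and $c_i\by^s_i=-1$ exactly when the error bit $e^b_i\defined\bin(c_i\by^s_i)$ of the pattern $\be^b=\bc^b\xor\by^b$ equals $1$, this factors as $\big(\prod_i p_+(|\by_i|)\big)\prod_i\big(p_-(|\by_i|)/p_+(|\by_i|)\big)^{e^b_i}$, where $p_\pm(a)$ denotes $p$ evaluated at the point of magnitude $a$ and sign $\pm$, and the first product does not depend on $\bc$. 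Because $C$ is linear, the map $\bc\mapsto\be^b$ is, for fixed $\by$, a bijection carrying $C$ onto the coset $\{\be^b:H\be^b=H\by^b\}$; hence the posterior of the error pattern is proportional to $\prod_i\big(p_-(|\by_i|)/p_+(|\by_i|)\big)^{e^b_i}$ on that coset and zero off it, so it depends on $\by$ only through $(|\by|,H\by^b)$. Writing $\bx_i=\by^s_i\cdot\bipolar(e^b_i)$, I would then take $\bF_i(|\by|,H\by^b)=\bipolar(\hat e^b_i)$ with $\hat e^b_i$ the bitwise-MAP estimate of $e^b_i$ (retaining the final $\sign$) to realize the bitwise-MAP decoder of $\bx$, which minimizes BER; or take $\bF_i(|\by|,H\by^b)=\E[\bipolar(e^b_i)\mid\by]$ (dropping the final $\sign$) to realize $\hbx_{\mathrm{soft},i}=\E[\bx_i\mid\by]$, the MMSE decoder. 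Each is a legitimate choice of $\bF$, being a function of $(|\by|,H\by^b)$ alone.

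The step I expect to be the main obstacle is making the posterior computation of Part~\ref{item:1} uniform over the whole BISO family: continuous channels (binary-input AWGN) require densities while discrete ones (BSC, erasure-type) require pmfs, so $p$, $p_\pm$, and the likelihood ratios $p_-/p_+$ should be set up via a common dominating measure (Radon--Nikodym derivatives of the transition kernel), with attention to output atoms. Once that bookkeeping is in place, BISO enters solely through the product form $\Pr[\by\mid\bx=\bc]=\prod_i p(c_i\by_i)$ with a codeword-independent $p$ --- that is, through the equivalent model~\eqref{eq:5} --- and the remainder (the uniform-prior hypothesis, the codeword/syndrome-constrained-error bijection, and the measure-zero tie sets) is routine.
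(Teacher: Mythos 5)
Your proposal is correct. Part~2 is essentially the paper's own argument: you show that $(|\by|,H\by^b)$ equals $(|\btZ|,H\btZ^b)$, hence the output factors as $\bx$ times a function of the noise alone, and the $\{\pm1\}$ nature of $\bx$ makes both the squared error and the bit-error events codeword-independent. For Part~1, however, you take a genuinely different route. The paper proves sufficiency of $(|\by|,H\by^b)$ by an information-theoretic decomposition: it constructs a left-inverse $A$ of the generator matrix, shows the stacked matrix $B=[H^T,A^T]^T$ is injective (its Lemma~1), rewrites the conditioning on $\by$ as conditioning on $(|\by|,H\by^b,A\by^b)$, and discards $A\by^b=\bM\xor A\btZ^b$ because the uniform message $\bM$ acts as a one-time pad making that term independent of the noise. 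You instead compute the posterior directly: factor the likelihood into a codeword-independent term times a product of likelihood ratios indexed by the error pattern $\bc^b\xor\by^b$, and use the classical coset bijection (for fixed $\by$, codewords correspond bijectively to error patterns with syndrome $H\by^b$) to conclude the posterior depends on $\by$ only through $(|\by|,H\by^b)$. Both arguments hinge on the same two ingredients --- the multiplicative noise model and the uniform prior on messages --- but yours is more constructive (it exhibits the posterior explicitly as likelihood ratios over a coset, making the optimal $\bF$ concrete and recovering the classical syndrome-decoding picture), while the paper's avoids writing likelihoods altogether and therefore sidesteps the density-versus-pmf bookkeeping you correctly flag as the one delicate point; it also accommodates redundant rows in $H$ explicitly via Lemma~1. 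Neither gap is substantive; your proof is a valid alternative.
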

\vspace{0.2cm}
We provide an outline of the proof here, and defer the details to 
Appendix~\ref{apdx:proof}.

\begin{proof} [Outline]  In Part~1 we neglect implementation concerns, and focus on realizations of $\bF$ that try to optimally estimate the multiplicative noise $\btz$ (see~\eqref{eq:5}).  By~\eqref{eq:5}, given $\by$, such estimation is equivalent to estimation of $\bx$. We argue that the pair $|\by|$ and $H\by^b$ is a sufficient statistic for estimation of $\btz$.  To see this, observe that $\by$ is equivalent to the pair $|\by|$ and $\sign(\by)$.  The latter term, in turn, is equivalent to the pair $H\by^b$ and $A\by^b$, where $\by^b$ is as defined in Sec.~\ref{sec:Notation} and $A$ is a pseudo-inverse of the code's generator matrix $G$.  By~\eqref{eq:3} and~\eqref{eq:5}, $\by^b = \bx^b\xor \btz^b$ and
so $A\by^b$ is the sum of the transmitted message $\bm=A\bx^b$ and $A\btz$ (the projection of $\btz$ onto the code subspace).  We argue that $A\by^b$ is independent of the noise and thus irrelevant to its estimation.  This follows because $\bm$ is independent of $\btz$, and we assume it to be uniformly distributed within the message space $\{0,1\}^K$.

\newcommand{\bff}{{\bf f}}

To prove Part~2 of the theorem, we allow $\bF$ to be arbitrary and show that the decoder's output can be modeled as $\bff(\btz)\cdot\bx$  for some vector-valued function $\bff(\cdot)$.  Thus, its relationship with $\bx$ (which determines the BER and MSE) depends on the noise $\btz$ alone.  To see why this holds, first observe that the inputs to $\bF$ (and consequently, its outputs) are dependent on the noise $\btz$ only.  This follows because the syndrome $H\by^b$ equals $H\btz^b$.  This in turns follows from the relation $\by^b = \bx^b\xor\btz^b$ and the fact that $\bx^b$ is a codeword, and so $H\bx^b={\bf 0}$.  By~\eqref{eq:5} and the bipolarity of $\bx$, the absolute value $|\by|$ equals $|\btz|$.  It now follows that the output of $\bF$ is  dependent on $\btz$ alone.  Multiplication by $\sign(\by)$ (see~Fig. \ref{fig:1})  is equivalent to multiplying by $\sign(\btz)\cdot\bx$ (by~\eqref{eq:5}) and the result follows.
\end{proof}

\section{Implementation using Deep Neural Networks}\label{sec:Architecture}

Theorem~\ref{thrm:Main} proves that our framework does not intrinsically involve a loss of optimality.  To realize its potential, we propose efficient implementations of the function $\bF$.  In this section, we discuss deep neural network implementations as well as a simple preprocessing technique that enables the networks to achieve improved performance.

\subsection{Deep Neural Network Architectures} \label{sec:arch}

We consider the following two architectures:

\begin{enumerate}
  \setlength{\itemsep}{1pt}
  \setlength{\parskip}{1pt}
  
\item {\bf Vanilla Multi-Layer:}
With this architecture, the neural network contains an array of fully-connected layers as illustrated in Fig.~\ref{fig:66}.  
It closely resembles simple designs~\cite{DeepLearningTextBook}  with the exception that we feed the network inputs into each of the layers, in addition to the output of the previous layer (this idea is borrowed from the belief propagation algorithm). 

The network includes 11 fully-connected layers, roughly equivalent to the 5 LDPC belief-propagation iterations as in~\cite{Nachmani_1}.    We use rectified linear unit (ReLU) nonlinear activation functions~\cite{DeepLearningTextBook}.
Each of the first 10 layers consists of the same number of nodes ($6N$ for block length $N=64$ and $15N$ for block length $N=127$). The final fully-connected layer has $N$ nodes and produces the network output, using a hyperbolic tangent arctivation function.

\myfigure{file=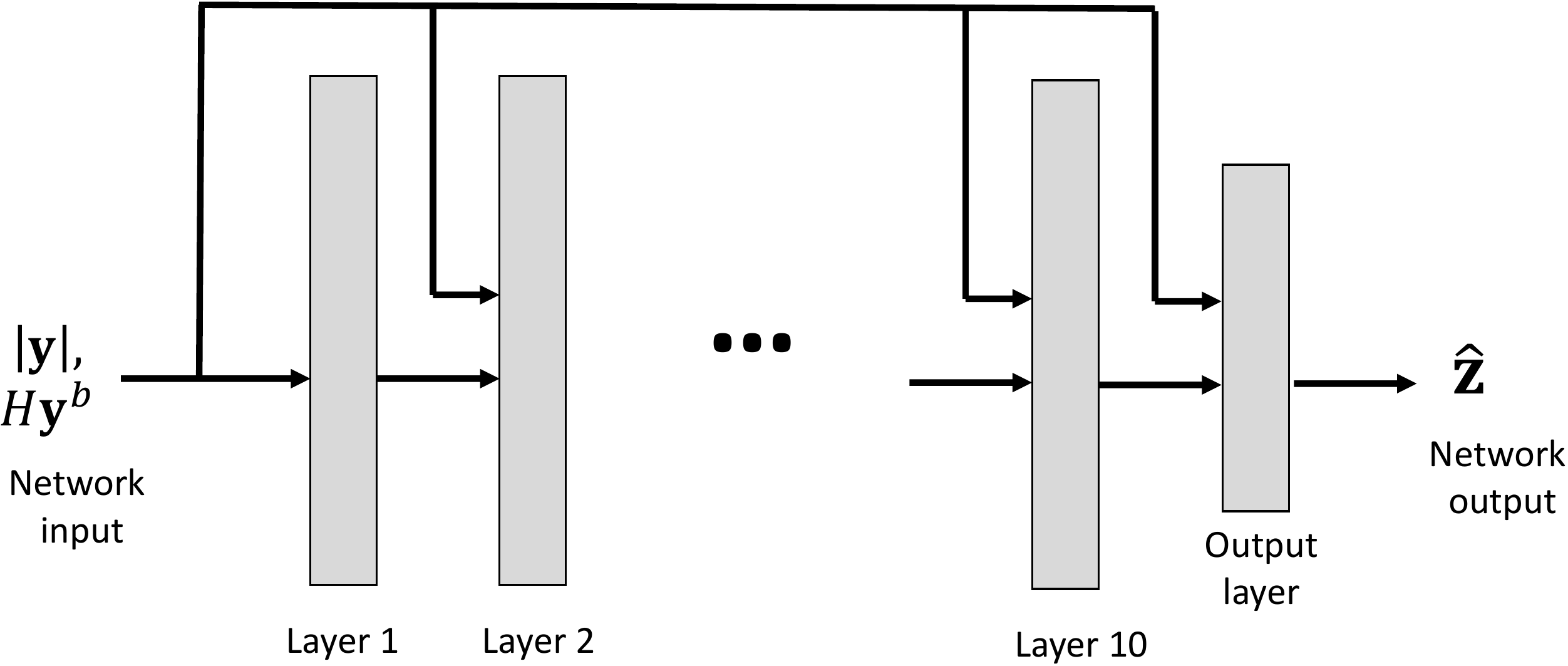, width = 8cm}{An instance of the Vanilla multi-layer architecture.}{fig:66}

\item{\bf Recurrent Neural Network (RNN):} 
With this architecture, we build a deep RNN by stacking multiple recurrent hidden states on top of each other~\cite{graves2013speech} as illustrated in Fig.~\ref{fig:3}.  RNNs realize a design which is equivalent to a multi-layer architecture by maintaining a network {\it hidden state} (memory) which is updated via feedback connections.  Note that from a practical perspective, this renders the network more memory-efficient. In many applications, this structure is useful to enable temporal processing, and stacking RNNs as in Fig.~\ref{fig:3} enables operation at different timescales.  In our setting, this interpretation does not apply, but we nonetheless continue to refer to RNN layers as ``time steps.''  Stacking multiple RNNs produces an effect that is similar to deepening the network.

We use Gated Recurrent Unit (GRU)~\cite{cho2014learning} cells which have shown peformance similar to well-known long short-term memory (LSTM) cells~\cite{hochreiter1997long}, but have less parameters due to the lack of a reset gate, making them a faster inference alternative.
We use the hyperbolic tangent nonlinear activation functions, the networks posses 4 RNN stacks (levels), the hidden state size is set to $5N$ and the RNN performs 5 time steps. 
\end{enumerate}
\myfigure{file=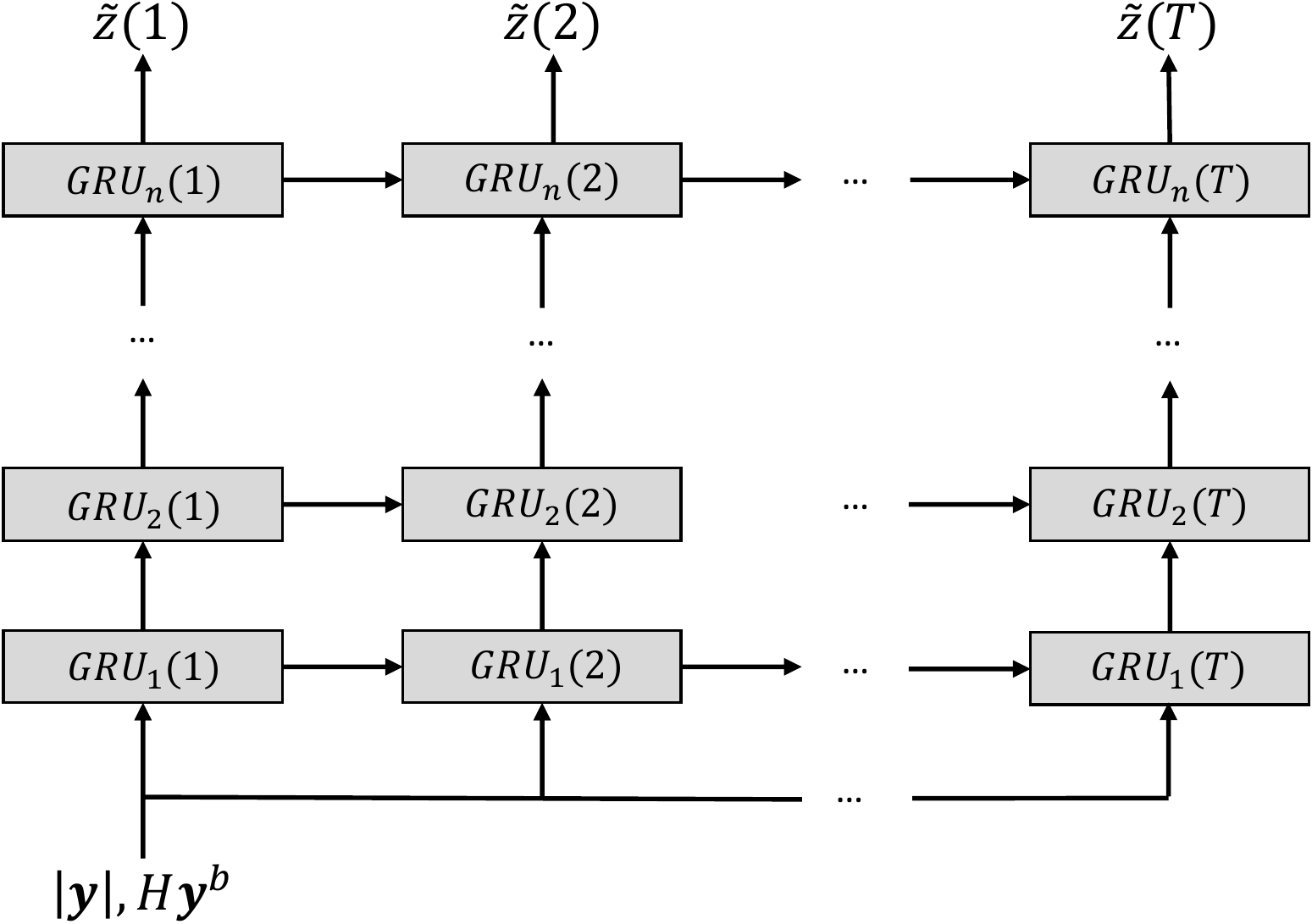, width = 7cm}{Our implementation of stacked RNN layers where $GRU_{i}(t)$ represents the $i$-th level cell at RNN time step (layer) $t$. }{fig:3}

To train the networks, we simulate transmission of the all-one codeword (assuming the bipolar alphabet, $\{\pm1\}$).  
We also simulate the multiplicative noise $\btz$, which in the case of an AWGN channel, is distributed as a mean-$1$ Gaussian random variable.  In our training for Sec.~\ref{sec:simulation_results}, we  set $E_b/N_0$ to 4 dB.  This was selected arbitrarily, and could potentially be improved. We use Google's TensorFlow library and the Adam optimizer~\cite{Adam}. Testing proceeds in the same lines, except that we use randomly generated codewords rather than the all-one codeword.  With each training batch, we generate a new set of noise samples.  While this procedure produced our best results, an alternative approach which fixes the training noise and uses other techniques (e.g., dropout) to overcome overfitting the noise, is worth exploring.

With the RNN architecture, the network produces multiple outputs (at each time-step) and we use the following loss function:
\begin{equation*} 
L=\frac{1}{N}\sum_{t=1}^{T} \sum_{i=1}^N\gamma^{T-t}H_\textrm{CE}(\tilde{z}_i^s,\hat{z}_i^s(t)),
\end{equation*}
where $H_\textrm{CE}$ is the cross-entropy function, $\tilde{z}^s_i$ is the sign of component $i$ of the multiplicative noise and  where  $\hat{z}_i^s(t)$ the network output corresponding to codebit $i$ at RNN time step (layer) $t$.  $\gamma < 1$ a discount factor (in our simulations, we used $\gamma = 0.5$).  The loss for the vanilla architecture is a degenerate version of the RNN one, with time steps and discount factors removed.


\subsection{Preprocessing by Permutation} \label{sec:preproc}
The performance of the implementations described above can further be improved by applying simple preprocessing  and postprocessing steps at the input and output of the decoder (respectively).  Our approach is depicted in Fig.~\ref{fig:7}. Preprocessing involves applying a permutation to the components of the channel output vector, and postprocessing applies the inverse permutation to the decoder output.  The approach draws on ideas from 
Fossorier~\etal~\cite{OSD_1,OSD_2} and Dimnik and Be'ery~\cite{Beery}.   Note that the approach deviates from these works in that it does {\it not} involve computing a list of vectors.  

\myfigure{file=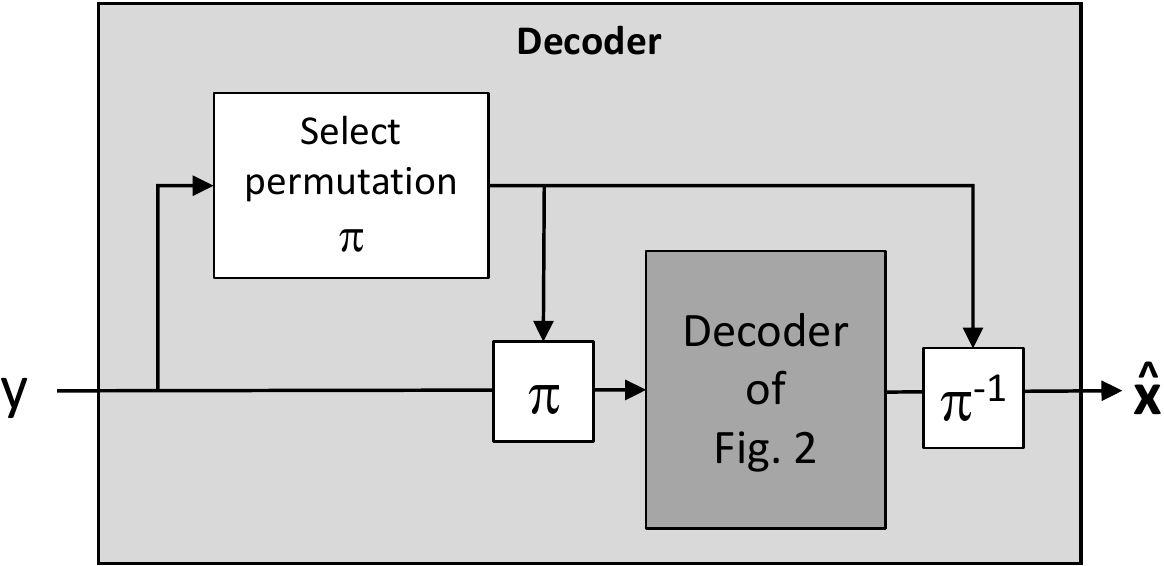, width = 8cm}{Decoder with perprocessing and postprocessing components.  Note that in Appendix~\ref{apdx:BCH_Perm} we show that this decoder has an equivalent representation which is a special case of Fig.~\ref{fig:1}.}{fig:7}

Similar to~\cite{OSD_1,OSD_2}, our decoder selects the preprocessing permutation so as to maximize the sum of the adjusted reliabilities of the first $K$ components of the permuted channel output vector. 
Borrowing an idea from~\cite{Beery}, however, we confine our permutations to subsets of the code's automorphism group.   We assume that the parity check matrix, by which the syndrome in Fig.~\ref{fig:7} is computed, is arranged so that  the last $N-K$ columns are diagonal and correspond to parity bits of the code.  

We define the adjusted reliability of a channel component $y_i$, denoted $R(y_i)$ by,
\begin{eqnarray}\label{eq:R}
R(y) \defined I(X;Y\given \:\:|Y|=|y|).
\end{eqnarray}
That is, $R(y)$ equals the mutual information of random variables $X$ and $Y$, denoting the channel input and output, respectively, conditioned on the event that the absolute value $|Y|$ equals $|y|$.  $X$ is uniformly distributed in $\{\pm1\}$ and $Y$ is related to it via the channel transition probabilities.  With this definition, our permutation selection criterion is equivalent to concentrating as much as possible of the channel capacity within the first $K$ channel output components.  

Unlike~\cite{OSD_2}, we borrow an idea from~\cite{Beery} and restrict the set of allowed permutations to the code's {\it automorphism group}~\cite{Sloane}.  Permutations in this group have the property that the permuted version of any codeword is guaranteed to be a codeword as well.  In our context, confinement to such permutations ensures that the decoder input (the permuted channel output) continues to obey the communication model, namely being a noisy valid codeword.  The decoder can thus continue to rely on the code's structure to decode.  

When compared to framework of Fig.~\ref{fig:1}, the decoder of Fig.~\ref{fig:7} has the added benefit of knowing that the $K$ first channel outputs are consistently more reliable than the remaining components.   That is, the input exhibits additional structure that the neural network can rely on.

In Appendix~\ref{apdx:BCH_Perm} we discuss permutations for BCH codes like  those  we will use in Sec~\ref{sec:simulation_results} below, as well as efficient methods for computing the optimal permutation.  We also discuss formal aspects related to applying the analysis of Sec.~\ref{sec:Theorem_Main} to the framework of Fig.~\ref{fig:7}.  
Note that to achieve good results, the added steps of Fig.~\ref{fig:7} need to be included during training of the neural network.  

In Sec.~\ref{sec:simulation_results} we present simulation results for BCH(127,64) codes with and without the above preprocessing method, demonstrating the effectiveness of this approach.  Note that with the shorter block length BCH(63,45) codes, preprocessing was not necessary, and we obtained performance that approaches the ordered statistics algorithm even without it.


\section{Simulation Results}\label{sec:simulation_results}

Fig.~\ref{fig:2} presents simulation results for communication with the BCH(63,45) code over an AWGN channel.  We simulated our two architectures, namely syndrome-based vanilla and stacked-RNN. Note that in this case, we did {\it not} simulate permutations (Sec.~\ref{sec:preproc}).  Also plotted are results for the best method of Nachmani~\etal~\cite{Nachmani_1},\cite{Nachmani_3}, the belief propagation (BP) algorithm, and for the ordered statistics decoding (OSD) algorithm~\cite{OSD_2} of order 2 (for this algorithm we simulated $10^4$ codewords for each $E_b/N_0$ point).    As can be seen from the results, both our architectures substantially outperform the BP algorithm.  Our stacked-RNN architecture, like that of~\cite{Nachmani_3}, approaches the OSD algorithm very closely.  

Fig.~\ref{fig:4} presents results for the BCH(127,64) code.  We simulated our syndrome-based stacked RNN method, with and without preprocessing.   As can be seen, the preprocessing step renders as a substantial benefit.  Also plotted are results for the BP and the OSD algorithms as well as the best results of~\cite{Nachmani_1}\footnote{Their paper~\cite{Nachmani_3} does not include results for this case.}.  Both our methods outperform the BP algorithm as well as the algorithm of~\cite{Nachmani_1}.   However, a gap remains to the OSD algorithm\footnote{Note that for $E_b/N_0$ of 4 dB or higher, we encountered no OSD errors for BCH(127,64), in our simulations.}, which widens with $E_b/N_0$.

With respect to the number of codewords simulated, with our algorithms, we simulated $10^5$ codewords for each $E_b/N_0$ point.  With the OSD algorithm, we simulated $10^4$ codewords for each $E_b/N_0$ point.  With the BP algorithm we simulated $10^3$ for each point.  

The analysis of~\cite{Nachmani_1} also includes an mRRD framework, into which their algorithm can be plugged to obtain superior performance.  While our algorithms can similarly be plugged into that framework, our interest in this paper is in methods whose primary components are neural networks.

\myfigure{file=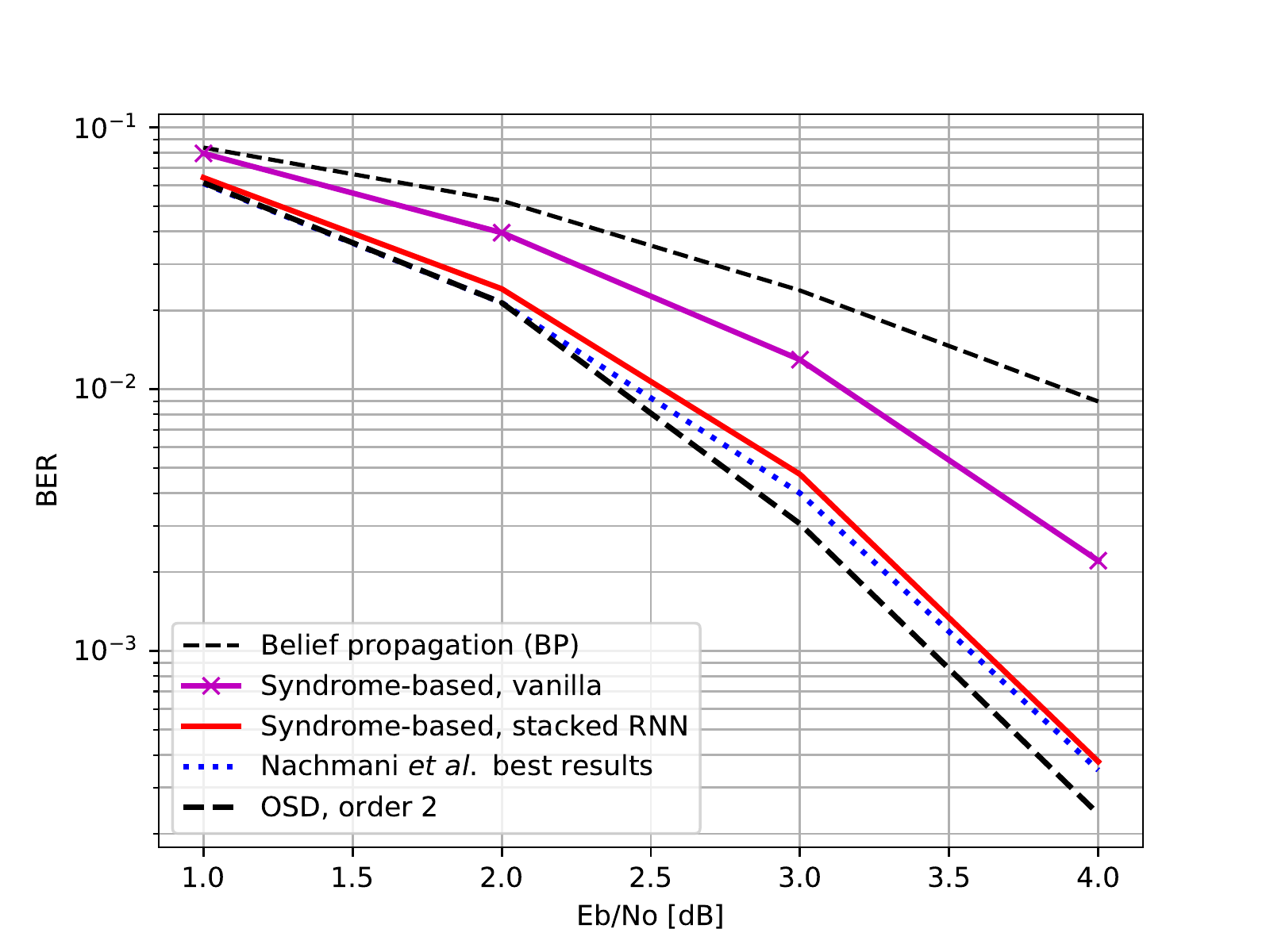, width = 9.5cm}{BER results for a BCH (63,45) code.}{fig:2}

\myfigure{file=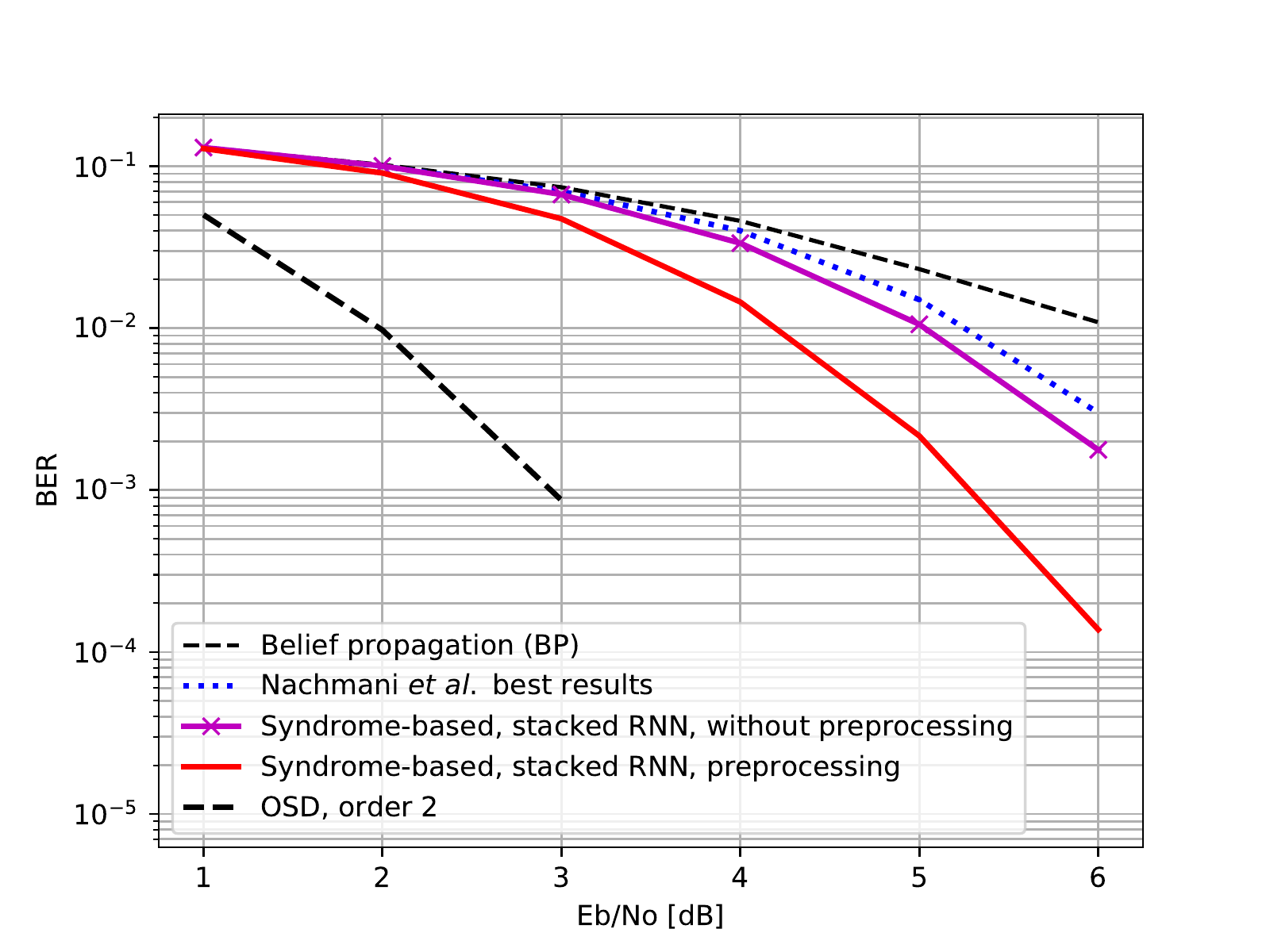, width = 9.5cm}{BER results for a BCH (127,64) code.}{fig:4}


\section{Conclusion}\label{sec:Conclusion}

Our work in this paper presents a promising new framework for the  application of deep learning to error correction coding.  An important benefit of our design is the elimination of the problem of overfitting to the training codeword set, which was experienced by~\cite{Hoydis_1,Hoydis_2}.  We achieve this by using  syndrome decoding, and by extending it to account for soft channel reliabilities.  Our approach enables the neural network to focus on the estimating the noise alone, rather than the transmitted codeword.    

It is interesting to compare our framework to that of Nachmani~\etal~\cite{Nachmani_1}. While their approach also resolves the overfitting problem and achieves impressive simulation results, it is heavily constrained to follow the structure of the LDPC belief-propagation decoding algorithm.  By contrast, our framework allows the unconstrained design of the neural network, and our architectures are free to draw from the rich experience that has emerged in recent years on neural network design.

Our simulation results demonstrate that our framework can be applied to achieve strong performance that approaches OSD.  Further research will examine additional  neural network architectures (beyond the RNN-based) and preprocessing methods, to improve our performance further.  It will also consider the questions of latency and complexity.  

We hope that research in these lines will produce powerful algorithms and have a substantial impact on error correction coding, rivaling the dramatic effect of deep learning on other fields.

\appendices
\section{Proof of Theorem~\ref{thrm:Main}}\label{apdx:proof}

In this section we provide the rigorous details of the proof, whose outline was provided in Sec~\ref{sec:analysis}.  

We begin with the following lemma.
\begin{lemma}\label{lemma:1}
	The following two claims hold with respect to the framework of Sec.~\ref{sec:1}:
	\begin{enumerate}
		\item There exists a matrix $A$ with dimensions $K\times N$, such that $A\bx^b = \bm$ for all $\bm$ and $\bx^b$ defined as in Sec.~\ref{sec:1} (recall that $\bm$ and $\bx^b$ are both column vectors).
		\item Let $B$ be a matrix obtained by concatenating the rows of $H$ and $A$ (i.e, $B=[H^T,A^T]^T)$.  	Then $B$ has full column rank, and is thus injective (one-to-one)
	\end{enumerate}
\end{lemma}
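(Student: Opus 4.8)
\noindent The plan is to argue entirely by elementary linear algebra over $\GF(2)$, using only that the encoder is injective. Let $G$ denote the $N\times K$ generator matrix of $C$, so that $\bx^b = G\bm \pmod 2$. Since the code contains $2^K$ distinct codewords, the map $\bm\mapsto G\bm$ is injective, i.e.\ $G$ has full column rank $K$. I would first invoke the standard fact that a matrix of full column rank over a field admits a left inverse: there is a $K\times N$ matrix $A$ over $\GF(2)$ with $AG = I$ (concretely, select $K$ linearly independent rows of $G$, invert the resulting $K\times K$ block over $\GF(2)$, and zero-pad the remaining columns). For this $A$ one gets $A\bx^b = AG\bm = \bm$ for every message $\bm$, which is exactly Claim~1; note this is the ``pseudo-inverse'' of $G$ referred to in the proof outline of Sec.~\ref{sec:analysis}.

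\noindent For Claim~2, observe that $B=[H^T,A^T]^T$ has $N$ columns, while $H$ contributes $N-K$ rows and $A$ contributes $K$ rows, so $B$ is square of size $N\times N$; hence full column rank, triviality of $\Ker B$, and injectivity of the map $\bv\mapsto B\bv$ are all equivalent, and it suffices to prove $\Ker B=\{\bzr\}$. I would take an arbitrary $\bv\in\GF(2)^N$ with $B\bv=\bzr$, which unpacks into the two conditions $H\bv=\bzr$ and $A\bv=\bzr$. The first says $\bv$ lies in the null space of the parity-check matrix $H$, which is precisely the code $C$, so $\bv = G\bu$ for some $\bu\in\GF(2)^K$ (the columns of $G$ span $C$). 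Substituting into the second condition and using $AG=I$ yields $\bzr = A\bv = AG\bu = \bu$; therefore $\bv = G\bu = \bzr$, and $B$ is injective.

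\noindent I do not expect a genuine obstacle here: the lemma is essentially a restatement of the fact that a codeword is recoverable from its information coordinates (via $A$) and is jointly determined by its syndrome and its message. The only points needing a little care are the conventions: that $G$ genuinely has full column rank (so the left inverse $A$ exists) and that $H$ is a parity-check matrix in the strict sense $\Ker H = C$ (so $H\bv=\bzr$ forces $\bv\in C$, hence $\bv=G\bu$). Once those two facts are in place, both claims reduce to the single identity $AG=I$ together with $\Ker H=C$, and the argument is a short computation.
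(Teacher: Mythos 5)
Your proof of Part~1 is the same as the paper's: both take $A$ to be a left inverse of the full-column-rank generator matrix $G$. For Part~2, however, you take a genuinely different (and somewhat more economical) route. The paper assumes without loss of generality that $H$ has full row rank, constructs a right inverse $D$ of $H$, and computes the product $B\cdot[G,D]=\bigl[\begin{smallmatrix}0 & I_{N-K}\\ I_K & AD\end{smallmatrix}\bigr]$, which has rank $N$, so $B$ must have rank $N$. You instead show $\Ker B=\{\bzr\}$ directly: $H\bv=\bzr$ forces $\bv\in C$, hence $\bv=G\bu$, and $A\bv=AG\bu=\bu=\bzr$. Your argument buys simplicity --- it needs no right inverse $D$ and no reduction to a full-row-rank $H$; indeed it applies verbatim in the case, explicitly allowed in the paper's remark after the lemma, where $H$ carries redundant linearly dependent rows (your parenthetical that $B$ is square $N\times N$ is then false, but it is also unnecessary, since full column rank, trivial kernel, and injectivity are equivalent for tall matrices too, and your kernel computation never uses squareness). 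What the paper's computation buys in exchange is an explicit invertibility certificate, the matrix $[G,D]$, which makes the rank claim verifiable by a single block-matrix multiplication. Both arguments rest on the same two facts you isolate at the end: $AG=I_K$ and $\Ker H=C$.
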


Note that in this lemma, we allow $H$ to contain redundant, linear dependent rows, as long as its rank remains $N-K$.  While we have not used such matrices in our work, the extra redundant rows could in theory be helpful in the design of effective neural networks for $\bF$.
\begin{proof}[Proof of Lemma~\ref{lemma:1}]
	Part~1 of the lemma follows simply from the properties of the generator matrix of the code $C$, denoted here $G$.  This matrix has full column-rank and dimensions $N\times K$ and satisfies $\bx^b = G\bm$.  Thus, we can define $A$ to be its left-inverse, and Part~1 of the lemma follows.\par
	To prove Part~2 of the lemma, we first observe that we can assume without loss of generality the matrix $H$ has full row-rank.   This is because by removing redundant (linear dependent) rows from $H$ we can obtain a full-rank parity-check matrix, and such removal cannot affect the rank of the corresponding $B$.  Let $D$ be a right-inverse of the matrix $H$, whose existence follows from the fact that $H$ has full row-rank.  $D$ has dimensions $N\times N-K$.  Consider the matrix $[G,D]$ (obtained by concatenating the columns of $G$ and $D$).  
	\begin{eqnarray*}
		B\cdot [G,D] &=& \begin{bmatrix}
			H \\
			A \\
		\end{bmatrix} \cdot \begin{bmatrix}
			G & D \\
		\end{bmatrix} = \begin{bmatrix}
			HG &  HD\\
			AG &  AD \end{bmatrix} \\&=& \begin{bmatrix}
			0 &  I_{N-K}\\
			I_{K} &  AD 
		\end{bmatrix},
	\end{eqnarray*}
	where $I_K$ and $I_{N-K}$ denote the identity matrices of dimensions $K$ and $N-K$, respectively.  The equality $HG = 0$ follows from the orthogonality of the generator and parity matrices, and equalities $HD = I_{N-K}$, $AG = I_K$ follow from the definitions above of $A$ and $D$.  The resulting matrix has rank $N$, and thus $B$ cannot have rank less than $N$.
\end{proof}

We now proceed to prove Part~1 of the theorem.  We use the following notation:  Vector are denoted by boldface (e.g., $\bx$) and scalars by normalface (e.g., $x$).  Random variables are upper-cased ($X$) and their instantiations lower-cased ($x$).  

We use the notation of Fig.~\ref{fig:1}, replacing lowercase with uppercase wherever we need to denote random variables.  Accordingly, we let $\bX$ and $\bY$ denote random variables corresponding to the channel input and output (respectively).  $\by$ is the realized channel output observed at the decoder and $x\in\{\pm1\}$ an arbitrary value.\par
	\begin{figure*}
		\begin{eqnarray}\label{eq:4}
		\Pr[X_i=x\given\bY=\by] &\addabove{=}{a}& \Pr[X_i\cdot Y^s_i =xy^s_i\:\given\:\bY=\by]\nonumber\\
		&\addabove{=}{b}& \Pr[\tZ^s_i = x y^s_i\:\given\: \bY=\by]\nonumber\\
		&\addabove{=}{c}& \Pr[\tZ^s_i = x y^s_i\:\given\:|\bY|=|\by|, \bY^s=\by^s]\nonumber\\
		&\addabove{=}{d}& \Pr[\tZ^s_i = x y^s_i\:\given\:|\btZ|=|\by|, \bY^b=\by^b]\nonumber\\
		&\addabove{=}{e}& \Pr[\tZ^s_i = x y^s_i\:\given\:|\btZ|=|\by|, B\bY^b=B\by^b]\nonumber\\
		&\addabove{=}{f}& \Pr[\tZ^s_i = x y^s_i\:\given\:|\btZ|=|\by|, H\bY^b=H\by^b,A\bY^b=A\by^b]\nonumber\\
		&\addabove{=}{g}& \Pr[\tZ^s_i = x y^s_i\:\given\:|\btZ|=|\by|, H[\bX^b\xor\btZ^b]=H\by^b,A[\bX^b\xor\btZ^b]=A\by^b]\nonumber\\
		&\addabove{=}{h}& \Pr[\tZ^s_i = x y^s_i\:\given\:|\btZ|=|\by|, H\btZ^b=H\by^b,\:\bM\xor A\btZ^b=A\by^b]\nonumber\\
		&\addabove{=}{i}& \Pr[\tZ^s_i = x y^s_i\:\given\:|\btZ|=|\by|, H\btZ^b=H\by^b]
		\end{eqnarray}
		\hrulefill
	\end{figure*}
	Consider the string of equations ending in~\eqref{eq:4}.  In (a), $Y^s_i$ is a random variable defined as $y^s_i$ (Fig.~\ref{fig:1}) and equality $Y^s_i=y^s_i$ follows from the condition $\bY=\by$.  In (b), we have relied on~\eqref{eq:5} to replace $\tZ^s_i=X_i\cdot Y^s_i$, where $\tZ^s_i\defined\sign(\tZ_i)$. In (c),  $|\bY|$ is the absolute value of $\bY$ and $\bY^s \defined \sign(\bY)$.  In (d), we have relied on~\eqref{eq:5} and the bipolarity of $\bX$ to obtain $|\btZ|=|\bY|$.  We have also defined and $\bY^b$ and $\by^b$ as in Fig.~\ref{fig:1}.  In (e), the matrix $B$ was defined as in Lemma~\ref{lemma:1} and equality follows by the fact that $B$ is injective.  In ($f$), we have used the definition $B=[H^T,A^T]^T$ where $A$ is as defined in Lemma~\ref{lemma:1}. In (g), we have decomposed $\bY^b=\bX^b\xor\btZ^b$.  This follows from~\eqref{eq:3} and~\eqref{eq:5}.  In (h), we have relied on $H\bX^b=\bzr$ which follows from the fact that $\bX^b$ is a valid codeword.  We have also replaced $A\bX^b=\bM$, where $\bM$ is the random message (see Fig.~\ref{fig:1}), following Lemma~\ref{lemma:1}.  Finally, in (i) we have made the observation that $\left[\bM\xor A\btZ^b\right]$ is independent of $\btZ$ and can therefore be omitted from the condition.  This follows because $\bM$, being the transmitted message, is uniformly distributed in $\{0,1\}^K$ (see Sec.~\ref{sec:1}) and independent of $A\btZ^b$ (and $\btZ$).
	
	The proof now follows from~(\ref{eq:4}).  To obtain the MAP decision for $X_i$ given $\by$ we can define the components of $\bF$ as follows, for $\bs\in\{0,1\}^{N-K}$ and $\ba \in \RR_+^N$.
	$$
	F_i(\ba, \bs) \defined \argmax_{\tz\in\{\pm 1\}} \left\{ \Pr[\tZ^s_i = \tz\:\given\:|\btZ|=\ba, H\btZ^b=\bs]\right\}
	$$
	By~(\ref{eq:4}), we now have $\MAP(X_i) = y^s_i F_i(|\by|, H\by^b)$.  Similarly, to obtain the MMSE estimate for $X_i$, we can define:
	$$
	F_i(\ba, \bs) \defined  \EE[\tZ^s_i\:\given\:|\btZ|=\ba, H\btZ^b=\bs]
	$$
	Recall from Fig.~\ref{fig:1} that in this case the decoder is configured to produce soft decisions and so the final sign operation is omitted (see Fig.~\ref{fig:1}).  This concludes the proof of Part~\ref{item:1} of the theorem.  
	
	Turning to Part~\ref{item:2}, in Sec~\ref{sec:analysis} we proved that the decoder's output can always be modeled as ${\bf f}(\btz)\cdot\bx$  for some vector-valued function ${\bf f} (\cdot)$.  With respect to the BER metric, the indices where the vectors ${\bf f}(\btz)\cdot\bx$ and $\bx$ diverge coincide with the indices where ${\bf f}(\btz)$ equals -1, and thus are independent of $\bx$.  With respect to the MSE metric (recall that in this case, the sign operation in Fig.~\ref{fig:1} is omitted), we have  $\MSE=\norm{{\bf f}(\btz)\cdot\bx - \bx}^2 = \norm{{\bf f}(\btz) - \bone}^2$ and thus the error is independent of $\bx$.
\QED

\section{Automorphisms of BCH codes}\label{apdx:BCH_Perm}
In this section, we discuss automorphisms of primitive narrow-sense binary BCH codes codes~\cite{Sloane}, which include the codes used in Sec~\ref{sec:simulation_results}.  

With the above codes, the blocklength $N$ equals $2^m-1$ for some positive integer $m$.  Codewords are binary vectors $\bc=[c_0,...,c_{N-1}]$ (i.e., defined over indices $i = 0,...,N-1$).  Permutations are bijective functions $\pi: \{0,...,N-1\} \rightarrow \{0,...,N-1\}$.  Given a codeword $\bc$ and a permutation $\pi$, we define the corresponding permuted codeword $\bc^\pi$  by $c^\pi_i = c_{\pi(i)}$.

The automorphism group of the above-mentioned BCH codes includes~\cite{Sloane} [pp.~233] permutations of the form:
\begin{eqnarray*}
	\pi_{k,l}(i) = 2^k i+ l \mod N
\end{eqnarray*}
where $k\in\{0,...m-1\}$ and $l\in\{0,...,N-1\}$.  The inverse permutation $\pi^{-1}_{k,l}$ can be shown to equals $\pi_{s,t}$ shere $s=m-k \mod m$ and $t = -2^s \cdot l \mod  N$.

We now address the question of efficiently finding the optimal permutation $\pi_{k,l}$ in the sense of Sec.~\ref{sec:preproc}, i.e., the permutation that maximizes the sum of adjusted reliabilities (see~\eqref{eq:R}) over the first $K$ components of the permuted codeword.  For fixed $k=0$, the set of permutations $\pi_{0,l}(i), l=0,...,N-1$ coincides with the set of cyclic permutations.  Finding the optimal cyclic permutation can efficiently be achieved by computing the cumulative sum of the adjusted reliabilities.  The case of arbitrary $k$ is adressed by observing that $\pi_{k,l}(i) = \pi_{0,l}(\pi_{k,0}(i))$.   The optimal permutation can be achieved by first applying the permutation $\pi_{k,0}$ and then repeating the above procedure for cyclic codes.  Finally, the optimal permutation across all $k$ is computed by combining the above results for each individual $k=0,...,m-1$.  With respect to computation latency, we note that the cumulative sum can be computed in logarithmic time by recursively splitting the range $0,...,N-1$.

Strictly speaking, the decoder of Fig.~\ref{fig:7} violates the framework of Sec.~\ref{sec:Theorem_Main}, because the preprocessing and postprocessing steps are not included in that framework.  In the case of BCH codes, however, this formal obstacle is easily overcome by removing the two steps and redefining $\bF$ to compensate.  Specifically, the preprocessing permutation can equivalently be realized within $\bF$ by permuting the vector $|\by|$ and manipulating the syndrome using identities detailed in~\cite{Sloane}.  The selection of the optimal permutation (which depends only on $|\by|$) and the postprocessing step can be redefined to be included in $\bF$.  The results of Sec.~\ref{sec:Theorem_Main} (particularly, resilience to overfitting) thus carry over to our setting.

\begin{small}

\end{small}

\end{document}